\newtheorem{theorem}{Theorem}[section]
\newtheorem{pr}[theorem]{Proposition}
\newtheorem{defi}[theorem]{Definition}
\theoremstyle{definition}
\newcommand{\bel}{\begin{equation} \label}
\newcommand{\ee}{\end{equation}}
\newcommand{\rd}{{\mathbb R}^{2}}
\newcommand{\R}{{\mathbb R}}
\newcommand{\Z}{{\mathbb Z}}
\def\beq{\begin{equation}}
\def\eeq{\end{equation}}
\newcommand{\bea}{\begin{eqnarray}}
\newcommand{\eea}{\end{eqnarray}}
\newcommand{\beas}{\begin{eqnarray*}}
\newcommand{\eeas}{\end{eqnarray*}}
\begin{document}

\begin{center}
{\Large \bf The Classical and Quantum Monodromy of the Champagne bottle potential}

\today \\ 

\end{center}

\medskip

\begin{center}
	Quang Sang PHAN
	
	Faculty of Fundamental Sciences,   ~\\
	PHENIKAA University, Yen Nghia, Ha Dong, Hanoi 12116, Vietnam 
	
	E-mail: sang.phanquang@phenikaa-uni.edu.vn
\end{center}

\begin{abstract} 
We illustrate an idea that geometric proprieties have a strong influence to quantum proprieties, through a famous model: the Champagne bottle. 
It is one of the simplest and typical examples of Liouville integrable systems that exhibit a non-trivial classical monodromy. This geometrical invariant perturbs globally the existence of action-angle coordinates on the phrase space. 
However, our work shows a new way to detect the geometric modification on the phrase space by looking at the spectrum of a single operator, that is small non-selfadjoint perturbation of a selfadjoint semiclassical operator accepting  the Hamiltonian of the Champagne bottle as its principal symbol. 
\end{abstract}

\medskip


\vspace{1cm}

Keywords: Integrable system, Champagne bottle, Non-selfadjoint, Spectral asymptotic

~~\\

\tableofcontents


 \fontsize{13}{18}  \selectfont

\section{Introduction}

The aim of this work is to study dynamical problems and the relationship with quantum problems, especially for a famous example-the Champagne bottle, with the help of semiclassical and microlocal analysis, and the spectral asymptotic theory.
We focus on two types of problems, on the one hand, a geometrical invariant-the classical monodromy of Liouville integrable systems, and on the other hand, the spectral study of non-selfadjoint Hamiltonian operators in the semi-classical limit. Such problems are of interest in classical mechanics, and quantum mechanics.

There exist a lot of examples of integrable systems with non-trivial monodromy. Along with the Spherical pendulum (Refs. \cite{Cush88}, and \cite{Gui89}), the Champagne bottle is one of the simplest and typical classical systems that exhibit a non-trivial monodromy. Here the Hamiltonian describes a particle moving in a potential field in the plane ($n=2$) shaped like a Champagne bottle and under the influence of rotationally symmetric potential ($S^1$ symmetric double well). Then the Hamiltonian $H$ including kinetic plus potential has the angular momentum as a conserved quantity, denoted by $J$. Therefore the system is completely integrable. The expressions of $H$ and  $J$ will be explicitly given by \eqref{H} and \eqref{J} in the next section.

This system was mentioned in many works and was studied in Ref. \cite{Bat91} by L. M. Baste. However that work lacked clarity, and especially the absence of the classification of singularities of the system. In fact the system has a unique singularity of focus-focus type, which is the crucial property leading to a non-trivial monodromy.

For this reason, in the first part of this project, we propose reconsidering the classical Champagne bottle, including a detailed classification of singularities (in the sense of Vey- Williamson) and discussing the geometric change of leaves on the phase space (with a remark on focus-focus singularities from Ref. \cite{Z97}) to show the non-trivial monodromy.

On the other hand, it is known upon many existing works that the classical dynamic has a great influence on the quantum aspect.
How can we detect the classical monodromy from quantum systems?

\textbf{The first answer} for this question is the quantum monodromy. It is an obstruction to the existence of a
global lattice structure for the joint spectrum of system of selfadjoint semiclassical operators that commute.
In the semiclassical limit, the quantum monodromy allows to recover the classical monodromy of the underlying classical system

The quantum monodromy was detected a long time ago for the Hamiltonian of the spherical pendulum, see Refs. \cite{Cush88}, and \cite{Gui89}. Then it was completely defined by S. V\~{u} Ng\d{o}c in Ref. \cite{Vu-Ngoc99}.

In the framework of this work, we will give explicitly the quantum system of the Champagne bottle, obtained by the quantifications of $H$ and  $J$, denoted by $\widehat{H}$ and $\widehat{J}$. Using a particular result form Ref. \cite{Vu-Ngoc99}, we can conclude that the non-trivial monodromy for the Champagne bottle can be given by the matrix
 \begin{equation}  \label{m}
 \mathcal{M}= \left[
                                      \begin{array}{cc}
                                        1 & 0 \\
                                        1 & 1 \\
                                      \end{array}
                                    \right]
\end{equation}
 in the group $ GL(2, \mathbb Z)$, modulo conjugation.

\textbf{The second answer} for the previous question is the spectral monodromy, defined in Refs. \cite{QS14}, and \cite{QS18}.
This invariant is directly defined  from the spectrum of a single non-selfadjoint semiclassical operator with two degrees of freedom, as the obstruction to the existence of a globally lattice structure for the spectrum, in the semiclassical limit.
The new idea here is that the spectral monodromy allows to detect the modification of action-angle variables from only one spectrum.

For the Champagne bottle, we will study the spectral monodromy of non-selfadjoint semiclassical operators, that are
small non-selfadjoint perturbations of $\widehat{H}$.

In summarizing, our work focus on the following goals:
\begin{enumerate} [(1)]
    \item Classify the singularities of the Champagne bottle, and show the non-trivial monodromy.
    \item Give explicitly the quantum system $(\widehat{H}, \widehat{J})$ for the Champagne bottle and its quantum monodromy.
    \item Study the spectral monodromy of small non-selfadjoint perturbations of $\widehat{H}$, and clarify that this monodromy allows to recover the classical monodromy of the Champagne bottle.
\end{enumerate}

\section{The classical Champagne bottle}

First we recall that a Hamiltonian $H$ on a $2n$-dimensional symplectic manifold (called the phase space) is said to be completely integrable if there exist $n$ independent almost everywhere smooth functions $f_1=H,f_2,\cdots, f_n$ that pairwise commute with respect to the Poisson bracket induced by the symplectic form of the manifold.
It is known from Action-angle Theorem (by Liouville, Mineur and Arnold,  Ref \cite{Duis80}) that each compact regular leaf of the momentum map $F=(f_1,f_2,\cdots, f_n)$ is a $n-$Lagrangian torus, on which the Hamiltonian flow of $H$ is quasi-periodic. Moreover, there exist local canonical coordinates $(x, \xi)=(x_1,x_2,\cdots, x_n, \xi_1, \xi_2,\cdots,  \xi_n )$ called action-angle variables, on a neighborhood of each torus so that the Hamiltonian depends only on the action variable $\xi$.
However, it is well known from Duistermaat's Ref. \cite{Duis80} that there exists an obstruction to the existence of globally action-angle coordinates, called the classical monodromy.

\subsection{The singularities of the Champagne bottle}
Let the phase space be the cotangent bundle $M=T^*\rd$ and we denote $ (x,\xi)=(x_1,x_2,\xi_1, \xi_2)$ the variables on $M$.
We suppose that a particle (but without loss of generality, assuming be of mass $m=1$) moves in the plane $\rd$ under the rotationally symmetric potential
$$V(x_1,x_2)=V(r)=r^4-r^2, \ \textrm{where} \ r= \sqrt{x_1^2+ x_2^2}, \ (x_1,x_2) \in \rd. $$ 

\begin{figure}[!h]
	\begin{center}
		\includegraphics [width=0.6\textwidth]{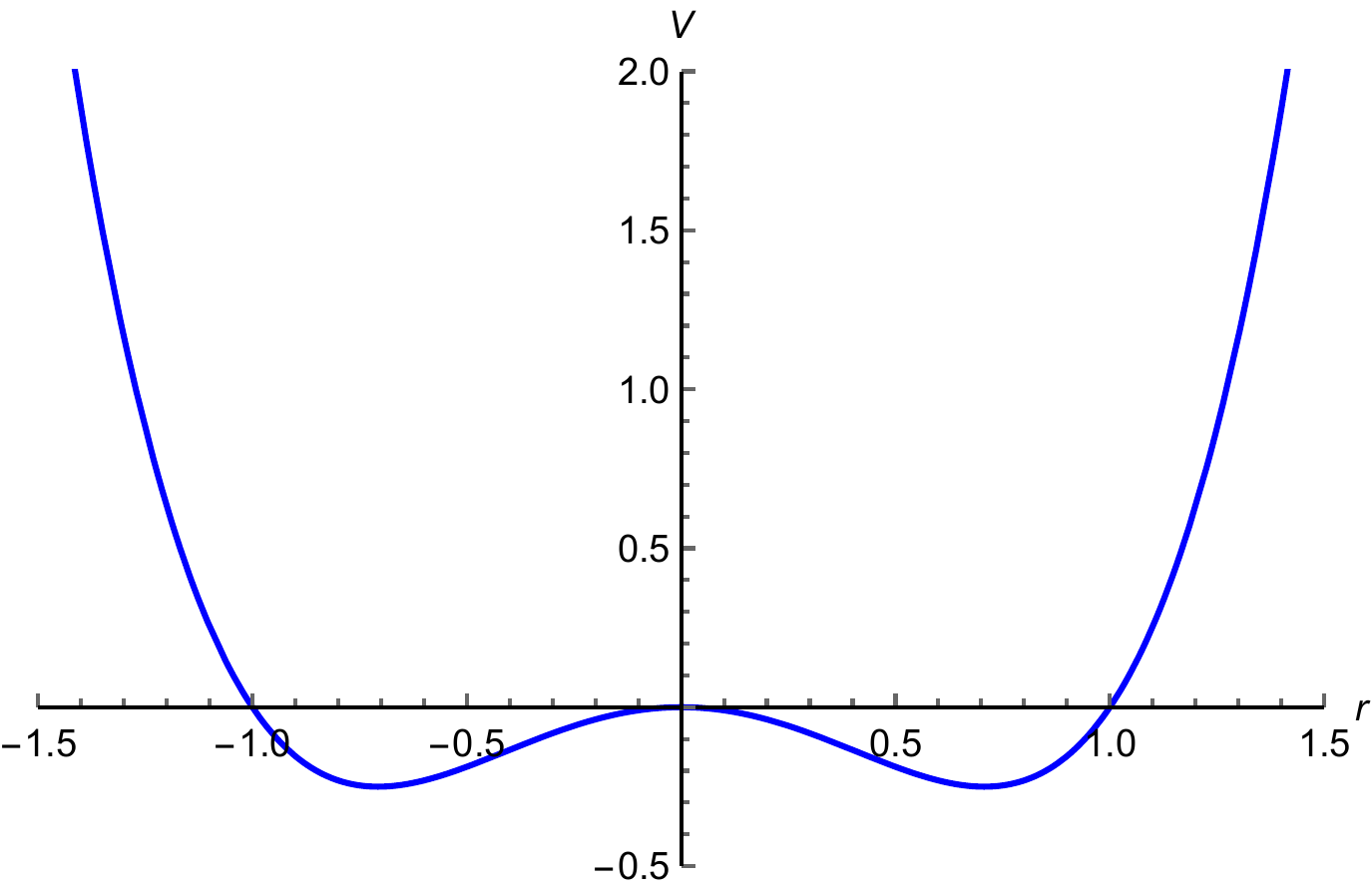}
		\caption{The potential of the Champagne bottle}
	\end{center}
	\label{f1}
\end{figure}

The Hamiltonian of the Champagne bottle is given by
 \begin{eqnarray}   \label{H}
H(x,\xi) &=&\frac{1}{2} (\xi_1^2+ \xi_2^2)+ V(x_1,x_2) 
 \\
 &= &  \frac{1}{2} (\xi_1^2+ \xi_2^2) + (x_1^2+ x_2^2)^2-(x_1^2+ x_2^2).    \nonumber
 \end{eqnarray}
 
It is clear that the system is rotationally invariant. Then the angular momentum is conserved and therefore Poisson commutes with $H$. Hence the system $H$ is completely integrable.
Here the angular momentum is
 \begin{equation} \label{J} J(x,\xi)= x_1 \xi_2-  x_2 \xi_1.   \end{equation}
Then the momentum map for the Champagne bottle is
$$F=(H, J): T^*\rd \longrightarrow \rd. $$

We recall that a singular point of $F$ means a point at which the differential $dF$ hasn't a maximum rank. 
Using the polar coordinates to rewrite $H$ and $J$, we can give the range of the momentum map $F$ and compute explicitly the singularities of $F$. 

For $r>0$, we introduce the polar coordinates $(r, \theta)$ and we denote $(\xi_1', \xi_2')$ the conjugate variables associated to $(r, \theta)$. 


Now we look at the differential $dF (x, \xi)$ on the tangent space to find the points where $F$ has rank less than two. 

\begin{pr} 
	\begin{enumerate} [(1)] 
		\item  $dH=0$ if and only if $x_1 = x_2=\xi_1= \xi_2=0$, or $\xi_1= \xi_2=0, \ r= \frac{1}{\sqrt{2}}$.
		\item $dJ=0$ if and only if $x_1 = x_2=\xi_1= \xi_2=0$. 
		\item In the polar coordinates, the singularities of rank 1 of $F$ are given by 
		\begin{equation} \xi_1'=0, \xi_2'^2=4r^6-2r^4 \  \textrm{for} \  r \geq  \frac{1}{\sqrt{2}}. 
		\end{equation}
	\end{enumerate}
\end{pr}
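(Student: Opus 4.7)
The plan is to treat the three claims separately, with (1) and (2) being direct Cartesian computations and (3) requiring a switch to the polar coordinates that the author has just introduced.

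For part (1), I would compute $dH$ in the $(x,\xi)$ coordinates of \eqref{H}. The $\xi$-derivatives give $\partial_{\xi_i}H=\xi_i$, so vanishing forces $\xi_1=\xi_2=0$. The $x$-derivatives give $\partial_{x_i}H=4x_i(x_1^2+x_2^2)-2x_i=2x_i(2r^2-1)$, which vanishes for every $i$ if and only if either $x_1=x_2=0$ or $2r^2-1=0$, i.e.\ $r=1/\sqrt 2$. This produces the two families listed. For part (2), from \eqref{J} one reads off $(\partial_{x_1}J,\partial_{x_2}J,\partial_{\xi_1}J,\partial_{\xi_2}J)=(\xi_2,-\xi_1,-x_2,x_1)$, and these all vanish simultaneously precisely at the origin. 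This part is essentially immediate.

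For part (3), I would first write down the canonical change to polar coordinates. Setting $x_1=r\cos\theta$, $x_2=r\sin\theta$ and letting $(\xi_1',\xi_2')=(p_r,p_\theta)$ be conjugate to $(r,\theta)$, the generating-function computation gives $\xi_1^2+\xi_2^2=p_r^2+p_\theta^2/r^2$ and $J=x_1\xi_2-x_2\xi_1=p_\theta$. Hence, on the open set $r>0$,
\[
H=\tfrac12\bigl(p_r^2+p_\theta^2/r^2\bigr)+r^4-r^2,\qquad J=p_\theta.
\]
Since $dJ=dp_\theta$ is nowhere zero in these coordinates, rank-$1$ singularities of $F=(H,J)$ are exactly the points where $dH$ is a (nonzero) multiple of $dp_\theta$, i.e.\ where $\partial_r H=\partial_{p_r}H=\partial_\theta H=0$ while $\partial_{p_\theta}H\neq 0$. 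The $\theta$-derivative vanishes identically by rotational symmetry, and the remaining two equations give $p_r=0$ and $-p_\theta^2/r^3+4r^3-2r=0$, i.e.\ $p_\theta^2=4r^6-2r^4$. For this to have a real solution one needs $4r^6-2r^4\geq 0$, equivalent to $r\geq 1/\sqrt 2$, which yields exactly the locus stated.

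The only subtle point — and in my view the main thing one must be careful with — is the handling of the coordinate singularity at $r=0$ and the boundary case $r=1/\sqrt 2$ in polar coordinates. At $r=1/\sqrt 2$ the condition forces $p_\theta=0$ as well, so the rank-$1$ family meets the rank-$0$ circle of critical points found in (1); one should observe this, note that strict rank $1$ occurs for $r>1/\sqrt 2$, and record that the limiting case glues consistently onto the rank-$0$ stratum. With that observation the three items together give a complete stratification of the critical set of the momentum map.
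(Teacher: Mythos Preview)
Your argument is correct and follows the same route as the paper: direct computation of $dH$ and $dJ$ for parts (1)--(2), and the polar rewriting $H=\tfrac12(p_r^2+p_\theta^2/r^2)+r^4-r^2$, $J=p_\theta$ together with the dependence condition $dH=\lambda\,dJ$ for part (3). The paper organizes it slightly differently (it treats the axis $x_1=x_2=0$ first in Cartesian form and then passes to polar coordinates for everything with $r>0$), but the computations and the resulting locus $\xi_1'=0$, $\xi_2'^{2}=4r^6-2r^4$ are identical.

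One correction to your closing remarks: the circle $r=1/\sqrt2$, $\xi_1=\xi_2=0$ found in (1) is \emph{not} a rank-$0$ set for $F$. At those points $dH=0$ but $dJ\neq 0$ (indeed $dJ=0$ only at the origin, by your own part (2)), so the rank of $dF$ there is exactly $1$. Accordingly, in (3) you should allow $dH$ to be \emph{any} scalar multiple of $dp_\theta$, including the zero multiple, rather than requiring $\partial_{p_\theta}H\neq 0$; this is exactly how the paper phrases it ($dH=\lambda\,dJ$ with $\lambda\in\mathbb{R}$). The boundary case $r=1/\sqrt2$ then sits inside the rank-$1$ stratum, not on a separate rank-$0$ stratum, and the only rank-$0$ point of $F$ is the origin.
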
 

\begin{proof}

When $x_1 = x_2=0$, 
$$dH= \xi_1 d \xi_1+ \xi_2 d \xi_2,  $$
and 
$$dJ=  \xi_2 d x_1- \xi_1 d x_2. $$
Then $dH \wedge  dJ= 0 \Leftrightarrow  \xi_1= \xi_2=0    $, and $(0,0)$ is a critical value of $F$.

In the polar coordinates $(r, \theta)$ where $r >0$, $H$ and $J$ become 
 \begin{eqnarray}   
H &=&  \frac{1}{2} (\xi_1'^2+ \frac{1}{r^2} \xi_2'^2) + r^4-r^2,  \nonumber
\\
J&= &  \xi_2'.  \label{J2}  
\end{eqnarray}

Then 
 \begin{eqnarray}   
dH &=&  \xi_1'd\xi_1'+       \frac{1}{r^2} \xi_2'  d\xi_2' + (4r^3-2r-\frac{ \xi_2'^2  }{r^3}) dr, \nonumber
\\
dJ&= & d \xi_2'.      \nonumber
\end{eqnarray} 
The form $dH \wedge dJ$ has rank 1 if and only if $dH= \lambda dJ$, $\lambda \in \R$, and this happens when 
\begin{equation} \label{cri pt} \xi_1'=0, \xi_2'^2=4r^6-2r^4 \  \textrm{for} \  r \geq  \frac{1}{\sqrt{2}}. 
\end{equation}
In this case, $dh= \lambda dJ$, for 
\begin{equation} \label{lbd}   
\lambda = \pm \sqrt{4r^2-2}, \  r \geq \frac{1}{\sqrt{2}}, 
  \end{equation}
and the critical values of rank 1 of $F$ (see Fig. 2) is the curve parametrized by 

  \begin{equation} \label{critical values}  
   (H, J)= ( \pm \sqrt{4r^6-2r^4}, 3r^4-2r^2), \  r \geq \frac{1}{\sqrt{2}}.
  \end{equation}

\end{proof}

\begin{figure}[!h]
	\begin{center}
		\includegraphics [width=0.45 \textwidth]{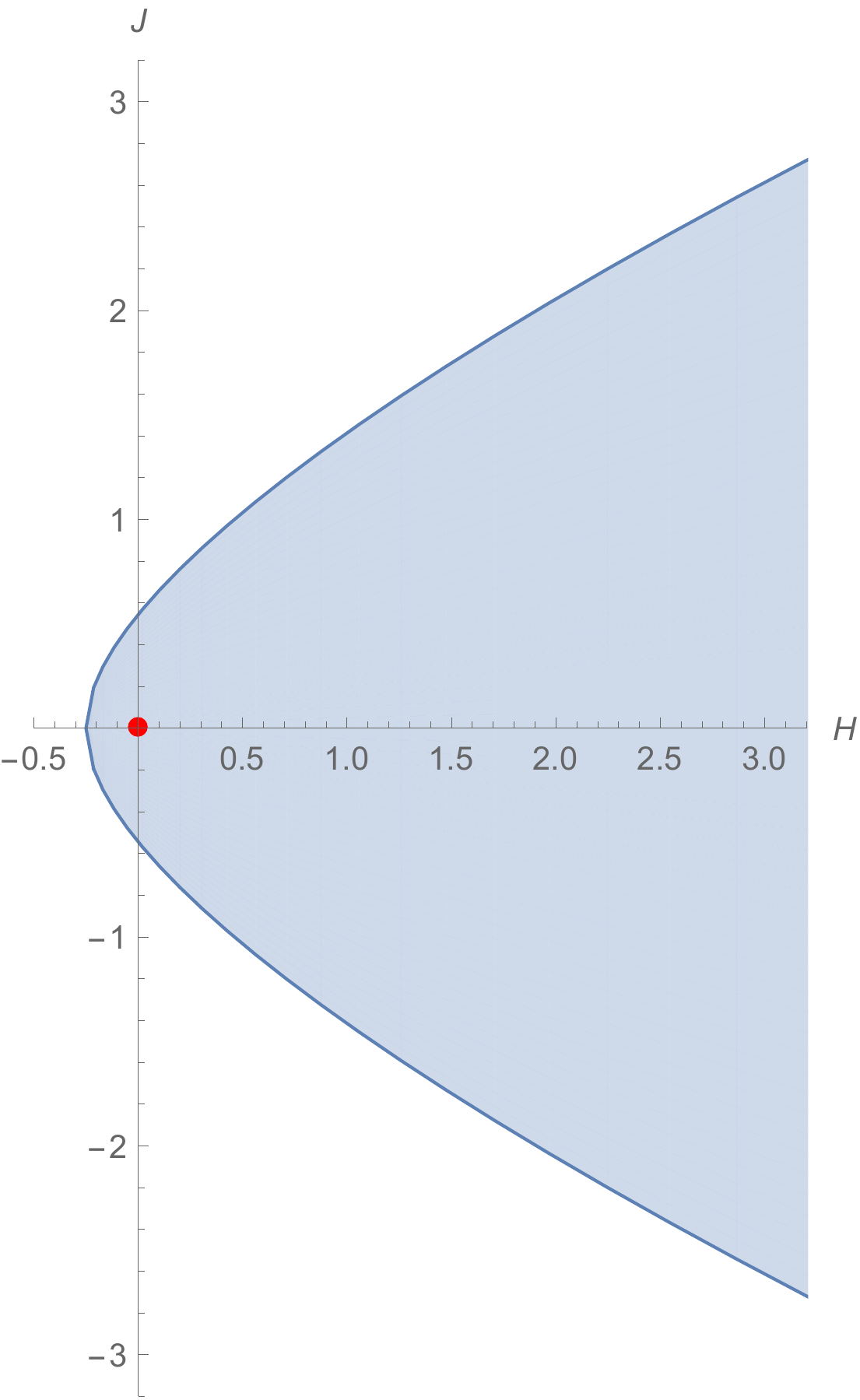}
		\caption{The critical values of the Champagne bottle}
	\end{center}
	\label{f1}
\end{figure}
\begin{pr} The image of $F$ is the connected domain, containing the origin, bounded by the curve parametrized by \eqref{critical values}. This curve is the set of critical values of $F$ of rank 1, the point $(0,0)$ is the image of critical points of rank $0$. (see Fig. 2)
\end{pr}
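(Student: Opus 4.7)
The plan is to reduce the identification of the image of $F$ to a one-dimensional problem via an effective radial potential, using the polar coordinates $(r, \theta, \xi_1', \xi_2')$ from the previous proposition. The statements concerning rank-$1$ and rank-$0$ critical values are essentially immediate from what has been established: the rank-$1$ critical set and its image under $F$ were computed in \eqref{cri pt} and \eqref{critical values}, while the previous proposition shows that $dJ$ vanishes only at the phase-space origin, where $dH$ also vanishes, so the unique rank-$0$ critical value is $(H, J) = (0, 0)$.

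For the image of $F$, I would fix $j \in \R$ and ask for which $h$ the equation $F = (h, j)$ admits a solution. In polar coordinates, valid on $\{r > 0\}$, writing $J = \xi_2' = j$ and substituting into \eqref{J2} yields
\beq
\xi_1'^2 \;=\; 2\bigl(h - V_{\mathrm{eff}}(r; j)\bigr), \qquad V_{\mathrm{eff}}(r; j) \;:=\; \tfrac{j^2}{2r^2} + r^4 - r^2.
\eeq
A preimage of $(h, j)$ therefore exists if and only if $h \geq \min_{r > 0} V_{\mathrm{eff}}(r; j)$. Setting $\partial_r V_{\mathrm{eff}} = 0$ gives $4 r^6 - 2 r^4 = j^2$, which by strict monotonicity of $r \mapsto 4r^6 - 2r^4$ on $[1/\sqrt{2}, \infty)$ has a unique positive root $r_0 = r_0(j) \geq 1/\sqrt{2}$. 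A direct substitution yields $V_{\mathrm{eff}}(r_0; j) = 3 r_0^4 - 2 r_0^2$, exactly the value appearing in the parametrization \eqref{critical values} of the rank-$1$ critical curve.

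I would then conclude that the image of $F$ is the closed region $\{(h, j) : h \geq V_{\mathrm{eff}}(r_0(j); j)\}$, bounded precisely by the curve \eqref{critical values} and containing the origin, since $r_0(0) = 1/\sqrt{2}$ gives $V_{\mathrm{eff}}(1/\sqrt{2}; 0) = -1/4 \leq 0$. Connectedness of the image follows at once from continuity of $F$ together with connectedness of $T^*\rd \simeq \R^4$. The degenerate locus $r = 0$, where polar coordinates break down, requires only a short check: there one has $J = 0$ and $H = \tfrac{1}{2}(\xi_1^2 + \xi_2^2) \geq 0$, already contained in the region just identified, so nothing is missed by working in polar coordinates.

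The main obstacle I expect is organizing the monotonicity arguments cleanly: verifying uniqueness of $r_0(j)$, ruling out minima at the open boundary $r \to 0^+$ or $r \to \infty$ (both of which send $V_{\mathrm{eff}} \to +\infty$ when $j \neq 0$), and checking that the map $j \mapsto (V_{\mathrm{eff}}(r_0(j); j), j)$ traces out exactly the parametrized curve \eqref{critical values}, with the two signs in that parametrization corresponding to $\pm j$. These steps are elementary but form the technical core of the argument.
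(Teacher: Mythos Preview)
Your argument is correct. The paper takes a slightly different route: it slices the image by fixing $H = E$, uses compactness of the level set $H^{-1}(E)$ to conclude that $J$ restricted to this hypersurface attains its extrema, and identifies those extreme points via the Lagrange condition $dH = \lambda\, dJ$, which recovers \eqref{cri pt} and hence the boundary curve \eqref{critical values}. You instead slice by fixing $J = j$ and reduce to a one-dimensional minimization of the effective radial potential $V_{\mathrm{eff}}(\cdot\,; j)$. Both approaches are standard and elementary; yours is more explicit about the monotonicity step, the degenerate locus $r = 0$, and connectedness of the image---points the paper leaves implicit---while the paper's compactness-plus-Lagrange argument is a bit quicker and avoids any direct analysis of $V_{\mathrm{eff}}$.
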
 

\begin{proof}
It is clear that the range of $H$ is $[- \frac{1}{4}, + \infty [  $. 
Since $H$ is a continuous function, so the level sets $H=E$, for any  
$E \geq - \frac{1}{4}$, is nonempty and compact. 

For $E=0$, as we see in the above paragraph, the single critical point of rank 2 of $F$ is $(0,0,0,0)$ corresponding to the critical value $(H=0, J=0)$.
 
For other values of $E$, the hypersurface $H^{-1}(E)$ is compact, on which the function $J$ is so bounded and its image is a close interval. As the extreme points of $J$ on these hypersurface are given by the equation $dH=\lambda dJ$, so they are of the form \eqref{cri pt}. The values of $H$ and $J$ at these points are given by \eqref{critical values}. For a given value $E \geq - \frac{1}{4}$, there exactly a single value $ r \geq \frac{1}{\sqrt{2}}$ and so two opposite values for $J$ as in \eqref{critical values}. 
\end{proof}

\subsection{Classification of the singularities}

In this part, we classify the singularities of $F$ and describe the geometry of the leaves of $F$ in phrase space. Here the type of singularities is classified  in the sense of Vey- Williamson. We recall  the definition for that classification.

Let  $F= (f_1, f_2, \ldots, f_n )$ be an integrable system.  

\begin{defi} [Ref. \cite{Vey78}] 
	We say that a point $m$ is a fixed point (or of corank $0$) of $F$ if $F$ has a non-differential at $m$.	
	Moreover, this fixed point $m$ is called non-degenerate (in sense of Vey-Williamson) if the Hessians $\mathcal{H}(f_j) $  generates a
	Cartan subalgebra of the Lie algebra $\mathcal{Q}(2n)$ of quadratic forms on $T_mM$ provided the Poisson bracket. 
\end{defi}


In general case, if $m$ is a critical point of corank $r$ (that means $dF(m)$ has rank $n-r$). We can suppose  for example that $df_1(m), df_2(m), \ldots, df_{n-r}(m)$ are linearly independent. Then the action of $f_1(m), f_2(m), \ldots, f_{n-r}(m)$ reduces a symplectic manifold $\sum$ of $M$, and we say that $m$ is non-degenerate (or also transversely non-degenerate) if the restriction of the system on $\sum$  (means $f_1(m), f_2(m), \ldots, f_{n-r}(m)$) admits $m$ as fixed point  non-degenerate, as in the above definition.  In this case the theorem of Eliasson (see Refs. \cite{Eli84}, and \cite{Eli90}) assures that the system admits a linear model on $T^* \R^{n-r} \times T_m \Sigma$ of the form
 \begin{equation}  \label{Q} 
 Q= (\xi_1, \ldots,\xi_{n-r}, q_1, \ldots, q_r ),
   \end{equation}
where $\xi_j$ are the canonical coordinates of $T^* \R^{n-r}$, and the $q_j$ form a basic for the Cartan subalgebra indicated in the above definition. 

\begin{theorem}[Eliasson, Ref. \cite{Eli84}] 
	The non-degenerate critical points are linearizable: there exists local symplectomorphism $\kappa$ from $ (R^{2n},0) $ into a neighborhood of $(M,m)$ such that
			$$\kappa ^* F=\varphi (Q),$$
	where $Q$ is the linear model as in \eqref{Q}, and  $\varphi:  (R^n,0) \rightarrow  (R^n,F(m)) $ is a local diffeomorphism. 	
\end{theorem}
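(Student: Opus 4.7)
My plan is to reduce the general corank $r$ statement to the fixed point case (corank $0$), and then apply the classical Vey–Williamson–Eliasson normalization at a non-degenerate fixed point.

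First I would carry out the symplectic reduction. Since $df_1(m),\dots,df_{n-r}(m)$ are linearly independent and the $f_j$ Poisson commute, the Darboux–Carath\'eodory theorem supplies symplectic coordinates $(\xi_1,\dots,\xi_{n-r},x_1,\dots,x_{n-r},u_1,\dots,u_r,v_1,\dots,v_r)$ near $m$ in which $f_j=\xi_j$ for $1\le j\le n-r$. The level set $\{\xi=x=0\}$ is a symplectic transversal $\Sigma$ of dimension $2r$, and the remaining functions $f_{n-r+1},\dots,f_n$ commute with the $\xi_j$, so up to flows of the $\xi_j$-fields they can be taken to depend only on $(u,v)$. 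The restricted system on $\Sigma$ then has $m$ as a non-degenerate fixed point in the sense of Vey–Williamson. Establishing this reduction cleanly (in particular, the fact that the correction terms can be eliminated by the abelian flow of the $\xi_j$) is essentially the statement that the pair $(M,F)$ is locally a symplectic product $T^*\R^{n-r}\times \Sigma$ near $m$; this is standard and I would quote it.

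Next I would treat the fixed point case on $\Sigma$. The Vey–Williamson classification tells us that any Cartan subalgebra of $\Q(2r)$ splits as a direct sum of elliptic blocks $q=\frac12(u^2+v^2)$, hyperbolic blocks $q=uv$, and focus–focus blocks $q_1=uv'+vu'$, $q_2=uv'-vu'$. So I would first produce a linear symplectic change of variables on $T_m\Sigma$ after which the Hessians $\mathcal{H}(f_j)$ span the standard Cartan subalgebra $\langle q_1,\dots,q_r\rangle$ appearing in \eqref{Q}. This is the algebraic heart of Vey–Williamson and the step that justifies why the model \eqref{Q} is universal.

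The main analytic step, and the hard part, is to upgrade this identification of quadratic parts to an actual symplectomorphism. In the analytic category this was done by Vey by combining Birkhoff normal form with a convergence argument; in the $C^\infty$ category, which is Eliasson's contribution, one cannot use Borel–type tricks directly because the quadratic model is not hypoelliptic in the presence of hyperbolic or focus–focus blocks. I would follow Eliasson's strategy: first produce a formal power series symplectomorphism intertwining $F$ and $\varphi\circ Q$ by inductively killing higher order terms with Moser-type cohomological equations $\{Q_j,S\}=R$ on the graded algebra of formal power series (solvability is precisely the division lemma for the commuting quadratic model), then realize the formal series by a smooth symplectomorphism using a Borel summation in the elliptic directions combined with an averaging/Moser path argument in the hyperbolic and focus–focus directions. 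The key input making the averaging converge is that the full system $F$ provides a $C^\infty$ family of commuting Hamiltonians whose flows generate a torus action in the elliptic part and whose joint kernel controls the flat remainder in the non-elliptic part. Once the smooth symplectomorphism $\kappa$ is obtained with $\kappa^*F$ and $\varphi(Q)$ having the same Taylor series at $0$, the diffeomorphism $\varphi$ is recovered from the identification of the images, completing the proof.
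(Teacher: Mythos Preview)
The paper does not prove this theorem at all: it is stated as a cited result from Eliasson's thesis (Ref.~\cite{Eli84}), with no proof or sketch given in the paper itself. So there is no ``paper's own proof'' to compare your proposal against.

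That said, your outline is a faithful summary of how Eliasson's argument actually proceeds: the Darboux--Carath\'eodory reduction to a transversal $\Sigma$ on which the remaining functions have a non-degenerate fixed point, the Williamson classification of the quadratic part, the formal Birkhoff normal form via the cohomological equations $\{q_j,S\}=R$, and finally the smooth realization, which is the genuinely hard step and where Eliasson's contribution lies. Your description of the last step is honest about its difficulty but necessarily vague; a full proof of the $C^\infty$ case (especially with hyperbolic or focus--focus blocks present) is long and technical, and was in fact only fully written up for the purely elliptic case in Ref.~\cite{Eli90}, with the general case remaining in the thesis~\cite{Eli84} and later clarifications by other authors. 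For the purposes of this paper, which only \emph{uses} Eliasson's theorem as a black box, your sketch is more than the paper itself provides.
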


\begin{theorem}[Williamson, Ref. \cite{Wi36}] 
	Let $\mathfrak{c}$ be a Cartan real subalgebra of $\mathcal{Q}(2n)$. There exists canonic coordinates $(x_1, \ldots, x_n, \xi_1, \ldots,\xi_n)$ on $\R^{2n}$, and a \enquote{standard basic} $ (q_1, \ldots, q_n)$ for $\mathfrak{c}$ such that each $q_j$ has one of three following forms: 
	\begin{enumerate} [(1)]
		\item  $q_j= x_j \xi_j$ (hyperbolic singularity)
		\item  $q_j=  \frac{1}{2}( x_j^2+  \xi_j^2)$ (elliptic singularity)
		\item  $q_j= x_j \xi_j+ x_{j+1} \xi_{j+1} $ (focus-focus singularity)
	\end{enumerate}	
\end{theorem}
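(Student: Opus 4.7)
The plan is to translate the question into the classification of Cartan subalgebras of $\mathfrak{sp}(2n,\R)$ and then exploit the very constrained spectrum of a symplectic endomorphism.

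First, I would identify $\mathcal{Q}(2n)$ with $\mathfrak{sp}(2n,\R)$ via the map sending a quadratic form $q$ to the matrix $A_q$ of its Hamiltonian vector field. Under this map the Poisson bracket corresponds (up to sign) to the matrix commutator, so a Cartan subalgebra $\mathfrak{c}\subset\mathcal{Q}(2n)$ is the same as a Cartan subalgebra of $\mathfrak{sp}(2n,\R)$, and the ambient group acting by conjugation is $\mathrm{Sp}(2n,\R)$. The statement then becomes: every Cartan subalgebra of $\mathfrak{sp}(2n,\R)$ is conjugate by a symplectic transformation to one spanned by the three listed model quadratic forms.

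Second, I would pick a regular element $A\in \mathfrak{c}$, so that $\mathfrak{c}$ is exactly the centralizer of $A$. The spectrum of any $A\in\mathfrak{sp}(2n,\R)$ is invariant under $\lambda\mapsto -\lambda$ and under complex conjugation, so the nonzero eigenvalues organize into three configurations: real pairs $\pm a$ with $a>0$, imaginary pairs $\pm ib$ with $b>0$, and complex quadruples $\pm a \pm ib$ with $a,b>0$; regularity rules out a zero eigenvalue and any repetition. Combining this spectral symmetry with the non-degeneracy of the symplectic form $\omega$, I would show that the (generalized) eigenspaces pair up into mutually $\omega$-orthogonal \emph{symplectic} subspaces: a $2$-plane for each real pair, a $2$-plane for each imaginary pair, and a $4$-space for each complex quadruple. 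This decomposes $\R^{2n}$ symplectically into blocks on which $A$ acts irreducibly.

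Third, on each block I would construct an explicit symplectic basis that puts $A$ in normal form, reading off the associated quadratic form. A hyperbolic $2$-plane yields $q=x\xi$; an elliptic $2$-plane yields $q=\tfrac{1}{2}(x^{2}+\xi^{2})$; a focus-focus $4$-block yields a $2$-dimensional abelian centralizer in $\mathfrak{sp}(4,\R)$ spanned by $q=x_{1}\xi_{1}+x_{2}\xi_{2}$ together with its partner $x_{1}\xi_{2}-x_{2}\xi_{1}$. Since $\mathfrak{c}$ is the centralizer of $A$ and $A$ acts irreducibly on each block, $\mathfrak{c}$ preserves the decomposition and on each block lies in the corresponding local centralizer; a dimension count shows these account for all of $\mathfrak{c}$, producing the required basis $(q_{1},\ldots,q_{n})$.

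The main obstacle is carrying out the block diagonalization in the focus-focus case: one must diagonalize over $\C$ a real symplectic endomorphism with four distinct complex eigenvalues and then descend to real Darboux coordinates compatible with the conjugation symmetry so that the two commuting real quadratic invariants take exactly the stated normal form. This is the one genuinely delicate linear-algebra computation; the hyperbolic and elliptic blocks are essentially routine $\mathfrak{sp}(2,\R)$ normal-form statements by comparison.
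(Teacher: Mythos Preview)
The paper does not give its own proof of this theorem: it is quoted as a classical result with a reference to Williamson's original 1936 paper and is then used as a black box in the subsequent classification of the Champagne bottle singularities. There is therefore no in-paper argument to compare your proposal against.

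For what it is worth, your outline is the standard route to Williamson's theorem: identify $\mathcal{Q}(2n)$ with $\mathfrak{sp}(2n,\R)$, choose a regular element of the Cartan subalgebra, use the $\lambda\mapsto -\lambda$ and $\lambda\mapsto\bar\lambda$ symmetries of its spectrum to split $\R^{2n}$ into mutually $\omega$-orthogonal symplectic blocks of dimension $2$ or $4$, and then produce Darboux coordinates on each block. You also correctly observe that the focus-focus block contributes a \emph{pair} of commuting quadratic forms, namely $x_j\xi_j+x_{j+1}\xi_{j+1}$ and $x_j\xi_{j+1}-x_{j+1}\xi_j$; the paper's statement lists only the first of these, but later in the text both are used when the focus-focus model $(q_1,q_2)$ is written down explicitly. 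The genuinely delicate step you flag---constructing real Darboux coordinates on the $4$-dimensional block from the complex eigenspace data---is indeed where the work lies, and your sketch stops short of actually carrying it out, but as a plan it is sound.
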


Note that according to Ref. \cite{Eli84}, the singular points are classified by the number of $q_j$ of the above types in a standard basic of $\mathfrak{c}$.

We come back now to the Champagne bottle. We will show that the system admit a unique focus-focus singularity, which causes a non-trivial monodromy.
For this we need look at the Hessians of $H$ and $J$ at their singularities. 

First, for the fixed point $M_0(0,0,0,0)$, we know from the previous section that it is the only critical point of the critical level $\Lambda_0= F^{-1}(0)$. The Hessian matrices of $H$ and $J$ at this point are

\begin{equation*}
d^2  (H)(0,0,0,0) = 
\left[
\begin{array}{cccc}  
-2& 0 & 0 & 0 \\
 0& -2 & 0 & 0\\
 0& 0 & 1 & 0 \\
 0& 0 & 0 & 1\\
\end{array}
\right], 
\end{equation*}

\begin{equation*}
d^2(J)(0,0,0,0) = 
\left[
\begin{array}{cccc}  
0& 0 & 0 & -1 \\
0& 0 & 1 & 0\\
0& 1 & 0 & 0 \\
-1& 0 & 0 & 0\\
\end{array}
\right].
\end{equation*}

The Hessians of $H$ and $J$ at $M_0$ are the quadratic forms associated to the above matrices:
$$ \mathcal{H}(H)  (M_0)= \xi_1^2-2x_1^2+ \xi_2^2-2x_2^2  , $$ 
$$ \mathcal{H}(J)(M_0)= 2(\xi_1x_2-\xi_2 x_1).  $$ 
We can show that in some local symplectic coordinates $(x,y, \xi, \eta)$ of $M_0$, the Hessians $ \mathcal{H}(H)$ and $ \mathcal{H}(J)$ generate a 2-dimensional subalgebra of the algebra $\mathcal{Q}(4)$ of quadratic forms in $(x,y, \xi, \eta)$ under Poisson bracket that admits the following basis $(q_1, q_2)$: 
  
   $$q_1= x \xi+ y \eta, $$
   $$q_2= x\eta -y \xi.  $$
Actually, to do this, we just use the canonical substitutions 
$$ x_1= \frac{1}{\sqrt[4]{2}}x_1', \ \xi_1= \sqrt[4]{2} \xi_1', \  x_2= \frac{1}{\sqrt[4]{2}}x_2', \ \xi_2= \sqrt[4]{2} \xi_2' $$
and then 
$$ x_1'= \frac{1}{\sqrt{2}}(x- \xi), \ \xi_1'= \frac{1}{\sqrt{2}}(x+\xi), \  x_2'= \frac{1}{\sqrt{2}}(y- \eta), \ \xi_2'= \frac{1}{\sqrt{2}}(y+ \eta).$$

Such a singularity $M_0$ is called a non degenerate focus-focus singularity.  
This point is an equilibrium point for both the dynamics of $H$ and $J$. Moreover, by looking at the stability of the dynamics of $q_1$ and $ q_2$ near the origin and therefore the ones of $H$ and $J$, we can conclude that this singularity $M_0$ is unstable for $H$, but contrariwise is stable for $J$. 

For a linear combination of $J$ and $H$, $M_0$ is a hyperbolic point whose close trajectories move away in spiraling on the unstable manifold and converge in spiraling on the stale manifold, see Fig.


 For the singular points of rank 1, whose image is on the curve \eqref{critical values} : they are of type transversely elliptic.  
 There exists a linear combination of $J$ and $H$ whose Hessian at such a singular point is a harmonic oscillator of one dimension. In fact, such a linear combination is of the form $H+ \lambda J$ for a value $ \lambda $ given by \eqref{lbd}. They are of type transversely elliptic singularities.

 For example, the singular points corresponding to the critical value $(0, - \frac{1}{4})$ are of the form $(\frac{1}{\sqrt{2}}, \theta, x_1'=0, x_2'=0) $ in the reduced space (that is $r= \frac{1}{\sqrt{2}}, \ \xi_1= \xi_2=0$  in the initial phrase space). The Hessian matrices of $H$ and $J$ at such a point are
 
 \begin{equation*}
 d^2  (H) = 
 \left[
 \begin{array}{cccc}  
 4& 0 & 0 & 0 \\
 0& 0 & 0 & 0\\
 0& 0 & 1 & 0 \\
 0& 0 & 0 & \frac{1}{2}  \\
 \end{array}
 \right],  \   d^2(J) = 0. 
 \end{equation*}
 Then the corresponding Hessians of $H$ and $J$  are
 $$ \mathcal{H}(H) = 4 r^2+ \xi_1'^2+ \frac{1}{2} \xi_2'^2 , \  \mathcal{H}(J)=0.  $$ 
These Hessians generate a 1-dimensional subalgebra of the algebra $\mathcal{Q}(4)$ of quadratic forms with a basis of a single harmonic oscillator. 
These singular points are stable equilibrium points for the dynamic of $H$. The Hessian of $H$ at such a singular point is a harmonic oscillator of one dimension.

\subsection{Fibration of the Champagne bottle}

In this section, we are studying the semi-global aspect of the Champagne bottle in talking about the geometry of the leaves of $F$ and considering $F$ as a singular Lagrangian foliation. 
For any value $c$ in the image of $F$, we denote the leaf  $\Lambda_c=F^{-1}(c) $. It is clear that $\Lambda_c $ is an invariant connected compact subset on the phrase space. 

Let us denote $U_r$ the set of regular values of $F$. 
If $c \in U_r$ is a regular value of $F$, then $\Lambda_c $ is always an invariant Lagrangian torus of 2-dimension (a Liouville torus) by  Action-angle theorem of Liouville-Arnold-Mineur, see Ref. \cite{Duis80}]. The space of regular leaves of F is so foliated by Liouville invariant tori. 
 Moreover, there exists a locally canonic transformation on a neighborhood of $\Lambda_c $, called angle-action coordinates. 

If $c $ is a critical value of rank 1, (i.e. a point on the curve \eqref{critical values}) then the critical fiber $\Lambda_c $ is a circle. 
In fact, let $m  \in \Lambda_c $ be a singular point of rank 1 (that satisfies \eqref{cri pt}). 
Then an application of Eliasson's theorem deduces that on a neighborhood of $m$, the system is locally symplectically reduced to the model
$$Q= (\xi, q), \ q=  \frac{1}{2}( y^2+  \eta^2), \ \textrm{on} \  T^* \mathbb{T}_{(x, \xi)}\times   T^* \mathbb{R}_{(y, \eta)}.$$
Hence the orbit of $m$ under the joint flot of $F$ (which is symplectically equivalent to the orbit of $0$ by the joint flot of $Q$) is a torus of dimension $1$. 
Moreover, the close fibers are 2-dimensional Liouville tori which can be seen as the product of the fiber by tori of dimension $1$ of transversal elliptic foliation given by $q$. 

The last case, if $c $ is the origin $(0,0)$, then $\Lambda_c $ is a pinched torus (a torus whose one of two cycles degenerates into a point). As we are known that $\Lambda_0 $ admit the point $M_0(0,0,0,0)$ as the only singularity, which is of focus-focus type.  
The Eliasson's theorem deduces that near this point, the system is symplectically conjugated to 
$$Q= (q_1, q_2), \  q_1= x \xi+ y \eta, q_2= x\eta -y \xi, \ \textrm{on} \    T^* \mathbb{R}^2_{(x,y,\xi \eta)}.$$
We remark that the Hamiltonian flot of $q_2$ is periodic of period $2 \pi$. Then $\Lambda_c $ is a pinched torus and the close fibers of $\Lambda_c $ are standard 2-dimensional Liouville tori $\mathbb{T}^2$. 

\begin{figure}[!h]
	\begin{center}
		\includegraphics [width=0.9 \textwidth]{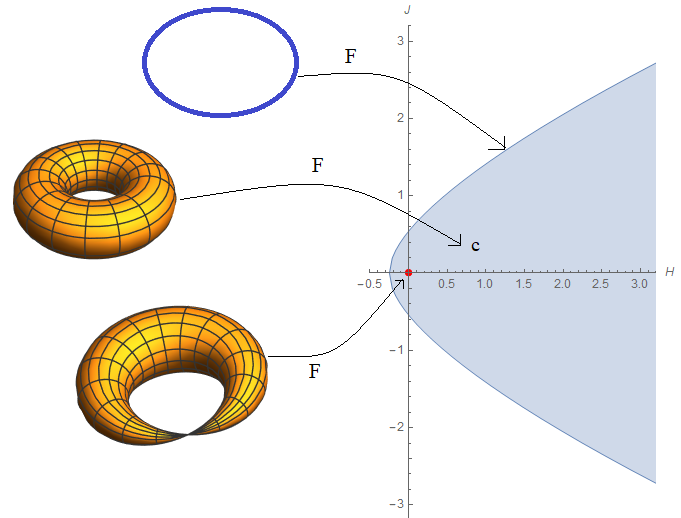}
		\caption{The fibration of the Champagne bottle}
	\end{center}
	\label{f1}
\end{figure}

We see clearly a change in the topology of leaves of $F$ near the focus-focus singularity. This singularity is the centre of a non-trivial monodromy as we shall explain in the next section.

\subsection{The classical monodromy of the Champagne bottle}

Let $U \subseteq U_r$ be a punctured open subset around $(0,0)$. The classical monodromy is an obstruction to the existence of global angle-action coordinates for the integrable system $F$, see Ref. \cite{Duis80}.  It is defined as the triviality property of the $\mathbb Z^n-$ bundle 
$$H_1(\Lambda_c, \mathbb Z) \rightarrow c \in U,$$
here $H_1(\Lambda_c, \mathbb Z)$ is the first homology group of the Liouville torus $\Lambda_c$. 

It is clear that for each $c \in U$, $H_1(\Lambda_c, \mathbb Z) \sim \pi_1 (\Lambda_c)$. Therefore the triviality of the previous bundle is equivalent to the one of the following bundle 
		$$\pi_1 (\Lambda_c) \rightarrow c \in U.$$

Suppose that $\{ U^j \}_{j \in \mathcal{J} }$ is an arbitrary open finite covering of $U$ (the diameter of each of the sets $U^j$ is sufficiently small).
Here $\mathcal{J}$ is a finite index set. We assume that $F^{-1}(U) \subseteq M$ can be covered by a finite covering of angle-action charts $ \{(V^j, \kappa ^j ) \} _ {j \in \mathcal{J} }$, $ V^j= F^{-1} (U_j) $.
Then the transition maps between trivializations of the previous bundle are given by 
$$\{ {}^t \big ( M_{ij}^{cl} \big ) ^{-1} \} _{i, j \in \mathcal{J}}, $$
where  $M _{ij}^{cl}=  d( ({\varphi ^i}) ^{-1} \circ \varphi ^j  )  $
are a locally constant matrices of $ GL(2, \mathbb Z)$ defined on the nonempty overlaps $V^i \cap V^j$, $i, j \in \mathcal{J}$. 

It is well known that the bundle has a nontrivial monodromy.
A result from Ref. \cite{Vu-Ngoc99} (see also Ref. \cite{Bat91}, ) shows that the monodromy in the case of focus-focus singularity is always non-trivial and given by the matrix  
$$ \mathcal{M}= \left[
\begin{array}{cc}
1 & 0 \\
1 & 1 \\
\end{array}
\right]
$$ 
in the group $ GL(2, \mathbb Z)$, modulo conjugation.

\section{The quantum Champagne bottle}

\subsection{Weyl-quantization}  \label{sec2.1}

We work with pseudodifferential operators obtained by the $h-$Weyl-quantization of a standard space of symbols on $T^*M =\mathbb R^{2n}_{(x,\xi)}$, here $M= \mathbb R^n$ or a compact manifold of $n$ dimensions, $n\geq 1$. We denote $\sigma $ the standard symplectic $2-$form on $T^*M$.

In the following we represent the quantization in the case $M= \mathbb R^n$. In the manifold case, the quantization is suitably introduced.
We refer to Refs. \cite{Dimas99}, and \cite{Shubin01} for the theory of pseudodifferential operators.


\begin{defi} \label{fonc ord}
	A function $m: \mathbb R^{2n} \rightarrow (0, + \infty)$ is called an order function
	if there are constants  $C,N >0$ such that
	$$m(X)  \leq C \langle X-Y\rangle^{ N} m(Y), \forall X,Y \in \mathbb R^{2n},$$
	with notation $\langle Z\rangle= (1+ |Z|^2)^{1/2}$ for $Z \in \mathbb R^{2n}$.
\end{defi}


\begin{defi}
	Let $m$ be an order function and $k \in \mathbb R$, we define classes of symbols of $h$-order $k$, denoted by $S^k(m)$ (families of functions), of $(a(\cdot;h))_{h \in (0,1]}$ on $\mathbb R^{2n}_{(x,\xi)}$ by
	\begin{equation}
	S^k(m)= \{ a \in C^\infty (\mathbb R^{2n})
	\mid  \forall \alpha \in \mathbb N ^{2n}, \quad |\partial^\alpha a | \leq  C_\alpha h^k m \} ,
	\end{equation}
	for some constant $C _\alpha >0$, uniformly in $h \in (0,1]$. \\
	A symbol is called $\mathcal O(h^\infty)$ if it's in $\cap _{k \in \mathbb R } S^k(m):= S^{\infty}(m) $.
\end{defi}

Then $ \Psi^k(m)(M)$ denotes the set of all (in general unbounded) linear operators  $A_h$ of order  $k$  on $L^2(\mathbb R^n)$, obtained from the $h-$Weyl-quantization of symbols $a(\cdot;h) \in S^k(m) $ by the integral:
\begin{equation} \label{symbole de W}
(A_h u)(x)=(Op^w_h (a) u)(x)= \frac{1}{(2 \pi h)^n}
\int_{ \mathbb R^{2n}} e^{\frac{i}{h}(x-y)\xi}
a(\frac{x+y}{2},\xi;h) u(y) dy d\xi.
\end{equation}

In this work, we always assume that symbols admit classical asymptotic expansions in integer powers of $h$.
The leading term in this expansion is called the principal symbol of operators.

\subsection{The quantum monodromy of the Champagne bottle}     \label{3.2}

By quantization for $H$ and $J$, we get the corresponding semiclassical operators acting on $L^2(\rd)$:
\begin{equation} \label{qH} \widehat{H}=-\frac{h^2}{2} \Delta +(x_1^2+ x_2^2)^2-(x_1^2+ x_2^2),\end{equation}
where $ \Delta$ denotes the Laplace operator, and the remain term assigns the multiplication operator by itself, and
\begin{equation} \label{qJ} \widehat{J}=\frac{h}{i} \big( x_1 \frac{\partial}{x_2}- x_2 \frac{\partial}{x_1} \big).\end{equation}

Then $\widehat{H}$ and $\widehat{J}$ commute.

We will give a description for the joint spectrum of the integrable quantum system $(\widehat{H}, \widehat{J})$.
Here the spectral theory will be used to prove that  $\widehat{H}$ is a semi-bounded, essentially selfadjoint operator with compact
resolvent. 
Therefore the spectrum of $\widehat{H}$ is purely discrete, formed of eigenvalues tending to infinity. The eigenspaces of $\widehat{H}$ are all of finite dimension contained in $C^{\infty}(\R^2)$. The operator $\widehat{J}$ acts on each eigenspace as a Hermitian matrix. Therefore we can diagonalize simultaneously $\widehat{H}$ and $\widehat{J}$ on a completely orthonormal basis $\mathcal{B}$ of $L^2(\R^2)$.

Moreover, since the expression \eqref{J2} of $J$  in the polar coordinates, we have 
$$\widehat{J}= \frac{h}{i}\frac{\partial}{\partial\theta}.$$
So that the eigenvalues of $\widehat{J}$ are integers. However we don't know explicitly the spectrum of $\widehat{H}$.
For each $u \in \mathcal{B} $, we have
\begin{equation}\label{jsp}
\widehat{H} u = \lambda_H^{(u)}u, \  \widehat{J} u = \lambda_J^{(u)} u, \ \textrm{with} \  \lambda_J^{(u)} \in \Z.
\end{equation}
Hence the joint spectrum of the integrable quantum system $(\widehat{H}, \widehat{J})$, that consists of pairs $(\lambda_H^{(u)}, \lambda_J^{(u)})$, $u \in \mathcal{B} $, forms a discrete lattice of $\R^2$, included fully in the image of the $F$.

The quantum monodromy of such a lattice, defined by S. Vu Ngoc in Ref. \cite{Vu-Ngoc99}, is an invariant non-trivial given by the matrix $ \mathcal{M}$ in \eqref{m} given in the group $ GL(2, \mathbb Z)$, modulo conjugation.

\begin{figure}[!h]  
	\begin{center}
		\includegraphics [width=0.8\textwidth]{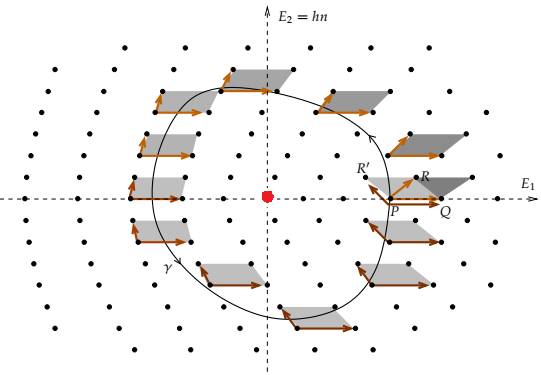}
		\caption{The quantum monodromy (Image from \cite{Vu-Ngoc06}) }
	\end{center}
	\label{fg2}
\end{figure}
Intuitively, if we realize the parallel transport on the lattice of a basic rectangle from a fixed position along a some closed loop in the domain of regular values of $F$ around the origin $(H,J)=(0,0)$ and then return to the starting position, then the initial rectangle becomes a different rectangle (see Fig. 4). This means the existence of non-trivial monodromy.

\subsection{The spectral monodromy of the Champagne bottle}

There exits another way of detecting the monodromy of a completely classical integrable system like the Hamiltonian of the Champagne bottle, by looking at the spectrum of a single operator that is small non-selfadjoint perturbation of a selfadjoint semiclassical operator with two degrees of freedom,  Refs.  \cite{QS14}, and \cite{QS18}. 

The general spectral asymptotic theory, see Ref. \cite{Hitrik07}, allow to describe locally the spectrum of such a perturbation as discrete latices. However with regard to the global problem, there exits a combinatorial invariant, called the spectral monodromy, that obstructs globally the existence of lattice structure of the spectrum, in the semiclassical limit. Moreover this monodromy allows to recover the classical monodromy of the underlying completely integrable system. 
We refer to Ref.  \cite{QS18} for a detailed construction for the spectral monodromy. In the following, first we present a simple case to illustrate a connection between the known monodromy of the Champagne bottle and a monodromy that can be defined from the spectrum, and then  a brief description for the spectral monodromy in a more complex case.


We start with a simple case in considering the operator

\begin{equation}  \label{op1}
 P_\varepsilon=\widehat{H}+ i \varepsilon\widehat{J},
 \end{equation}
where $\widehat{H}$ and $\widehat{J}$ given in Sec. \ref{3.2}, and  $\varepsilon$ is a small real parameter. 
~\\
Then since \eqref{jsp}, it is clear that the spectrum of $ P_\varepsilon$ is the discrete lattice
$$   \sigma (P_\varepsilon) = \{\lambda_H^{(u)} + i \varepsilon \lambda_J^{(u)}, u \in \mathcal{B} \}.$$
We identify $\mathbb C$ with $\mathbb R^2$ and let  $\chi$ be the map
\begin{eqnarray}   \label{chi}
\chi :  \mathbb R^2 \ni u= (u_1,u_2) \mapsto \chi_u &=& (u_1, \varepsilon u_2) \in \mathbb R^2
\\
&\cong & u_1+i \varepsilon u_2 \in \mathbb C.    \nonumber
\end{eqnarray}
Then $\chi^{-1} (\sigma (P_\varepsilon))$ is clearly the joint spectrum of the quantum system $(\widehat{H}, \widehat{I})$. 
Hence a monodromy can be defined for the spectrum of $ P_\varepsilon$ as the quantum monodromy, introduced in Sec. \ref{3.2}.

With the help of a numerical calculation we can explicitly give the spectrum of $P_\varepsilon$. 
Using the software Mathematica for $h=0.1$ and $\varepsilon= \sqrt{h}$,
Fig. 5 shows a lattice structure of $\chi^{-1} (\sigma (P_\varepsilon))$.  
 It is very nice that this discrete lattice is fully included in the image of $F$ (see also Fig. 2).

\begin{figure}[!h]
	\begin{center}
		\includegraphics [width=0.4\textwidth]{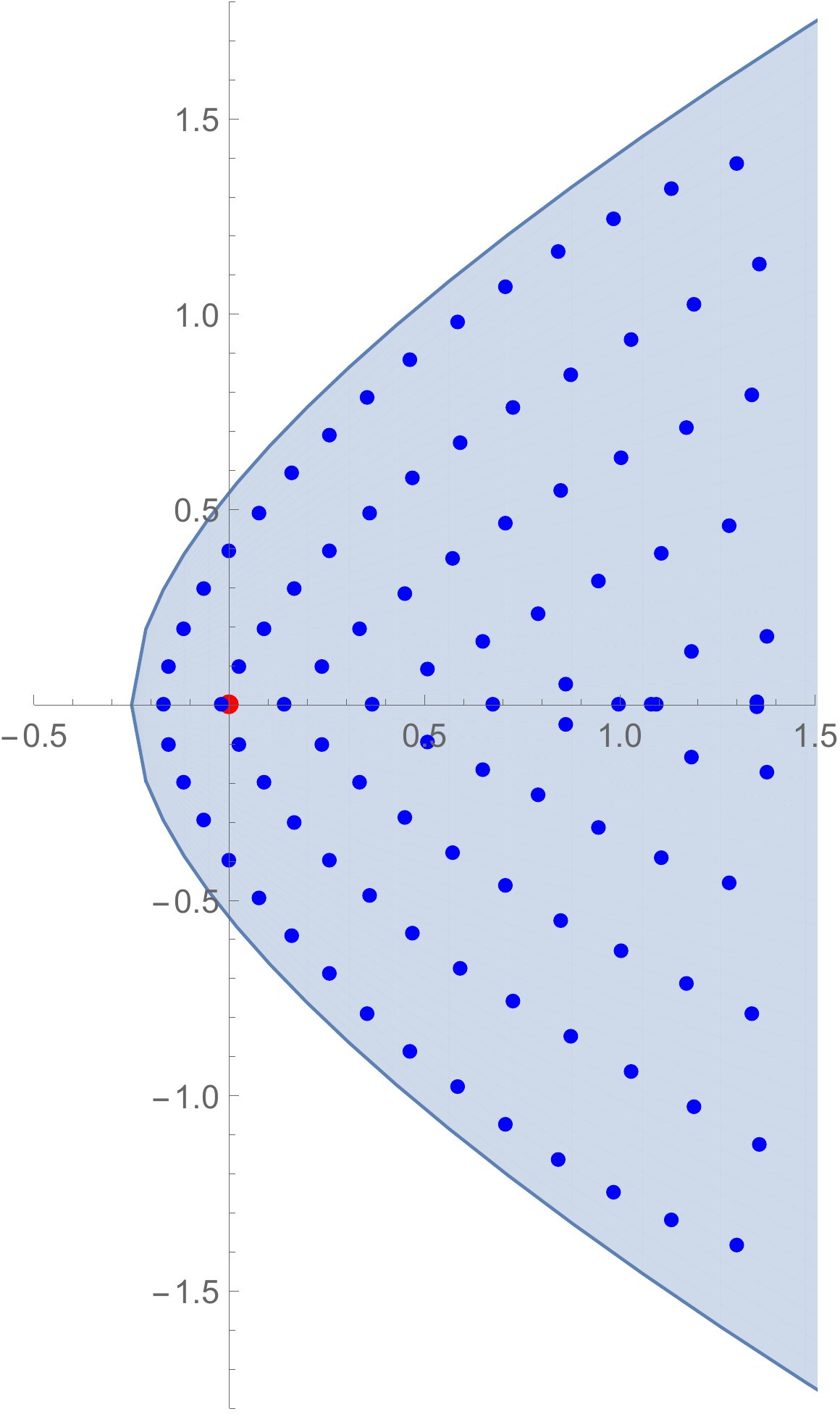}
		\caption{The spectrum of the Champagne bottle through $\chi^{-1}$ }
	\end{center}
	\label{f1}
\end{figure}


In the more complex case, we consider a more complex perturbation of  $\widehat{H}$ than the operator given in \eqref{op1}.
In fact we defined in Ref. \cite{QS18} a combinatorial invariant, called spectral monodromy, directly from the spectrum of small non-selfadjoint perturbations of a selfadjoint pseudodifferential operator with two degrees of freedom, in the semiclassical limit, assuming that the principal symbol of the selfadjoint pseudodifferential operator is a completely integrable system. 

Specifically, let  $\varepsilon$ be a small parameter assumed to depend on the classical parameter $h$ and in the regime $h \ll \varepsilon = \mathcal{O}(h^\delta)$, with $0< \delta <1 $. The operator of interest is of order $0$ and of the form
\begin{equation} \label{tt}
P_{\varepsilon}= P(x,hD_x,\varepsilon; h ), \end{equation}
depending smoothly on $\varepsilon$ on a neighborhood of $(0, \R)$ such that the unperturbed operator $P:= P_{\varepsilon=0}$ is formally selfadjoint whose pricipal symbol $p$ is completely integrable. 

Let $p_\varepsilon$ be the principal symbol of $P_{\varepsilon}$ and then $q=\frac{1}{i}(\frac{\partial p_\varepsilon}{\partial \varepsilon})_{\varepsilon
	=0}$, that is assumed to be an analytic real function on $T^*M$. Then $p_\varepsilon$ can be written in Taylor expansion of the form 
\begin{equation}  \label{symb prin}
p_\varepsilon=p+i \varepsilon q+ \mathcal O (\varepsilon ^2 ).
\end{equation}

The spectral asymptotic theory (see Ref. \cite{Hitrik07}) states that, under some suitable global assumption of $p$ and $q$, the spectrum of  $P_{\varepsilon}$  is discrete, and included in a horizontal band of size $\mathcal{O}(\varepsilon)$.
Moreover there is a correspondence (in fact a local diffeomorphism, see Ref. \cite{QS14}), say $f$, from a small complex window of $ \mathbb C$ into $\mathbb R^2$ such that the eigenvalues of $P_{\varepsilon}$ contained in this window are sent to a part of $h \mathbb Z^2 $, modulo $\mathcal O(h^\infty)$. 
The spectrum therefore has locally the structure of a deformed lattice, with horizontal spacing $h$ and vertical spacing $\varepsilon h$.
That map $f$ admits asymptotic expansions in $h$ and $\varepsilon$, whose leading term is locally well defined.  Such a map can be seen as a \emph{local chart} for the spectral band, see Fig. 6.

\begin{figure}[!h]
	\begin{center}
		\includegraphics[width=0.8 \textwidth]{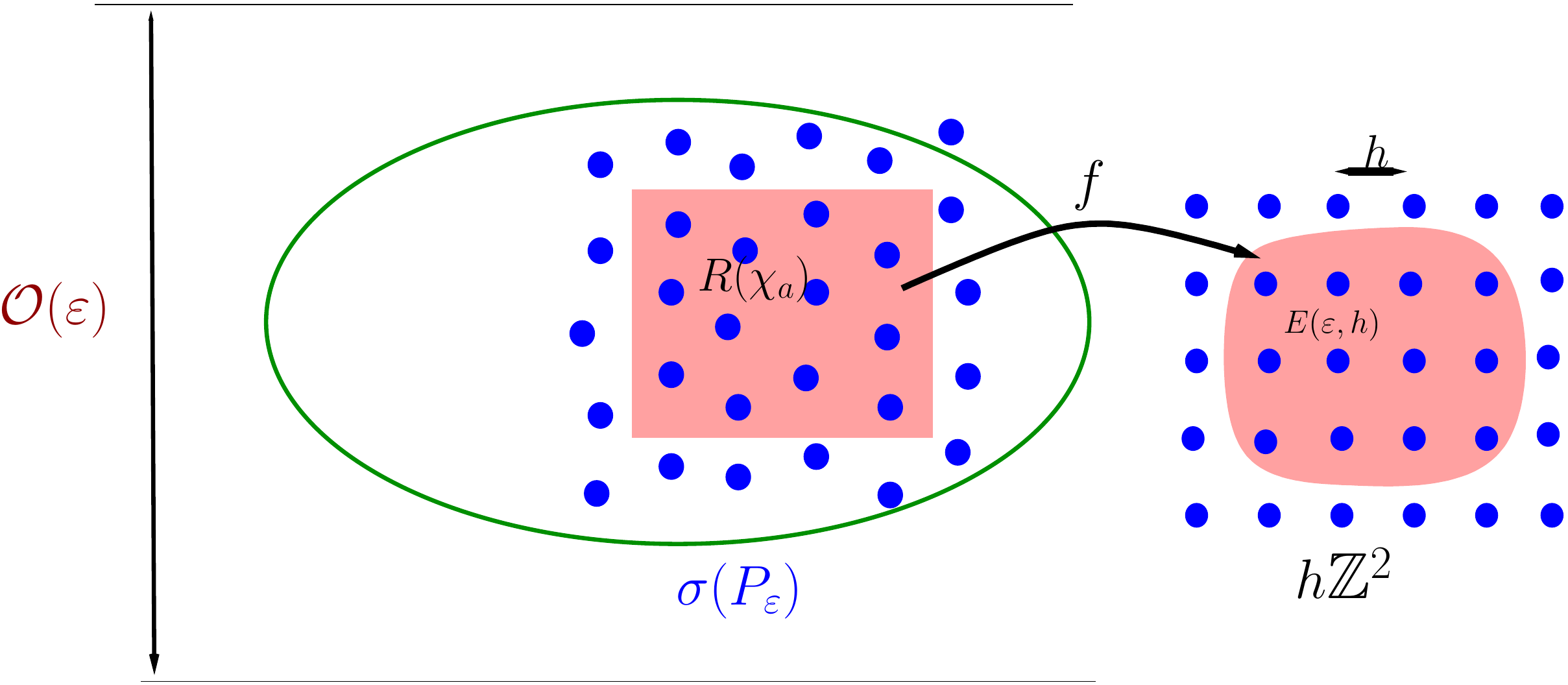}
		\caption{$h-$local chart of the spectrum}
	\end{center}
	\label{fg1}
\end{figure}

With regard to the global problem, a result of Ref. \cite{QS14} shows that the differential of the transition maps between two overlapping local charts $ (f_i, U^i(\varepsilon)) $ and $ (f_ j, U^j (\varepsilon))$ is in the group $GL(2,\mathbb Z)$, modulo $\mathcal O(\varepsilon, \frac{h}{\varepsilon})$:
$$ d (\widetilde{f}_i) = M_{ij} d( \widetilde{f}_j)+ \mathcal O(\varepsilon, \frac{h}{\varepsilon}),$$
where $\widetilde{f}_i= f_i \circ \chi$, $\widetilde{f}_j= f_j \circ \chi$, and $\chi$ is the function given by \eqref{chi}, and $M_{ij} \in GL(2, \mathbb Z)$ is an integer constant matrix.

Let $U (\varepsilon)$ be a bounded open domain in the spectral band and cover it by an arbitrary locally finite covering of local charts
$ \{ \left(  f_ j ,  U^j (\varepsilon) \right) \}_{j \in \mathcal{J} }$, here $\mathcal{J}$ is a finite index set. Then the spectral monodromy of $P_\varepsilon$ on $U (\varepsilon)$ is defined as the unique $1$-cocycle $\{  M_{ij} \} $, modulo coboundary in the first \v{C}ech cohomology group.

Moreover, due to the fact that the leading term of $\widetilde{f}_i$ in asymptotic expansions in  $\varepsilon$ and  $ \frac{h}{\varepsilon}$ can be chosen  as an integral action over invariant Liuoville tori of $p$, 
this spectral monodromy therefore allows to recover the classical monodromy of the underlying completely integrable system. More precisely, the spectral monodromy of $P_\varepsilon$ is the adjoint of the classical monodromy defined by $p$ (see Theo. 4.7, Ref. \cite{QS18}).


The above construction can be totally applied to the case of the Champagne bottle by taking $P$ as the quantization $\widehat{H}$,  given in  \eqref{qH}. We can even take the function $q$ in \eqref{symb prin} by $J$, given in \eqref{J} as a particular case.

Les us illustrate the above discussion by a numerical calculation for the spectrum of a specific operator $P_\varepsilon$. 
With the help of Mathematica, Fig. 7 shows the discrete lattice of $\sigma (P_\varepsilon)$, where we try $h=0.1$, $\varepsilon= \sqrt{h}$ and 
\begin{equation}
P_\varepsilon= \widehat{H}+ i \varepsilon\widehat{J}
+   \varepsilon ^2 x_1^5  \frac{\partial}{x_2} +h  x_2^3 \frac{\partial}{x_1}.
\end{equation}

\begin{figure}[!h]
	\begin{center}
		\includegraphics [width=0.6\textwidth]{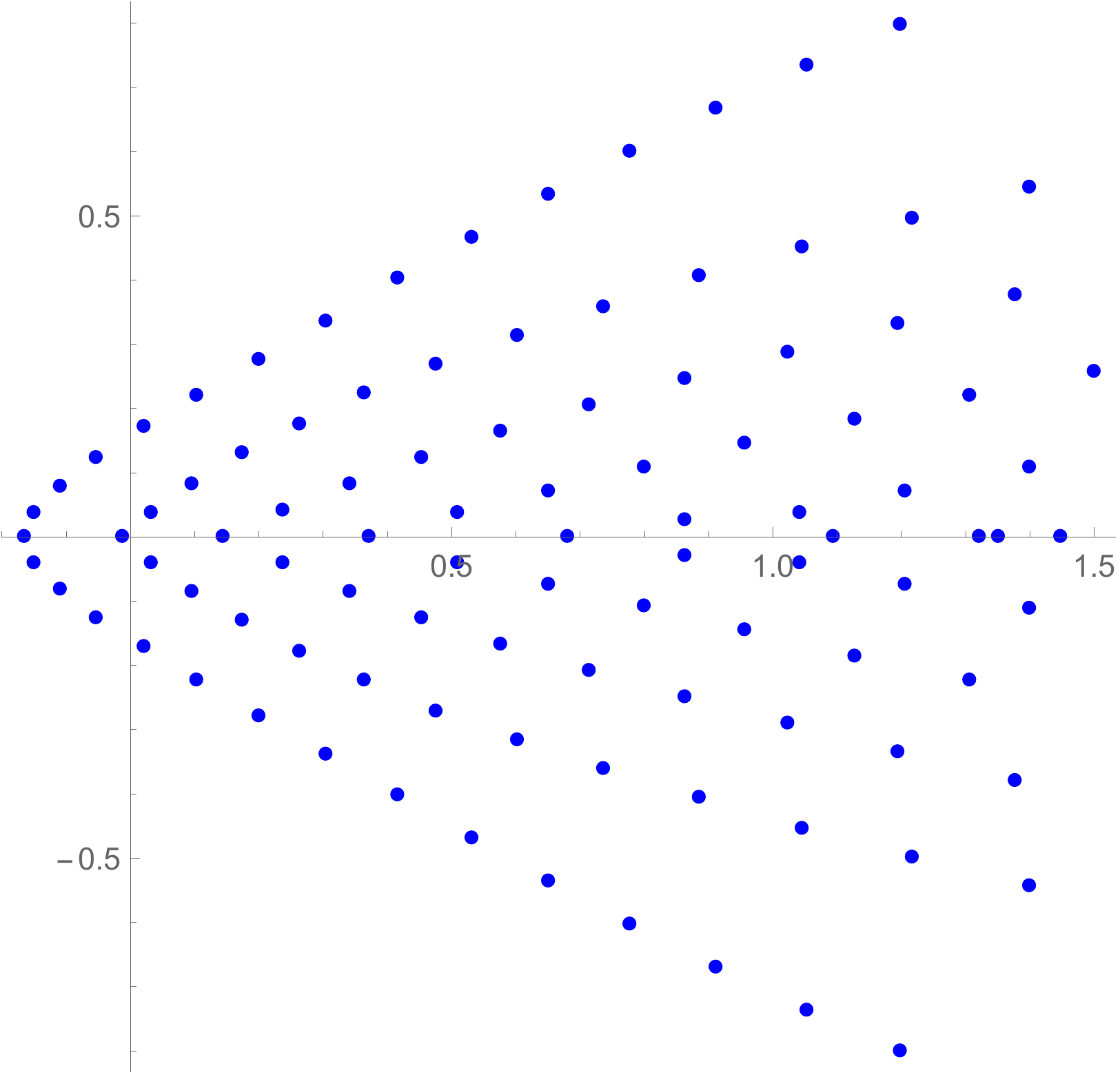}
		\caption{The spectrum of the Champagne bottle}
	\end{center}
	\label{f1}
\end{figure}

In that way the nontrivial monodromy of the Champagne bottle is well identified to the spectral monodromy of small non-selfadjoint perturbation of a selfadjoint pseudodifferential operator, whose principal symbol is the Hamiltonian of the Champagne bottle.






\bigskip


\begin{thebibliography} {[10]}
\frenchspacing \baselineskip=12 pt plus 1pt minus 1pt


\bibitem {Duis80} {J. J. Duistermaat}, {\em On global action-angle coordinates}, Commun. Pure
    Appl. Math. {\bf 33} (1980), no. 6, p. 687- 706.



 \bibitem {Bat91} {L. M. Bates}, {\em Monodromy in the champagne bottle}, Z. Angew. Math. Phys. {\bf 42} (1991), no. 6, p. 837-847.



\bibitem {Cush88} {R. H. Cushman and J. J. Duistermaat}, {\em The quantum mechanical spherical pendulum}, Bull. Amer. Math. Soc. (N.S.) {\bf 19} (1988), no. 2, p. 475-479.

\bibitem {Gui89} {V. Guillemin and A. Uribe}, {\em Monodromy in the quantum spherical pendulum}, Commun. Math. Phys. {\bf 122} (1989), p. 563-574.


\bibitem {Vu-Ngoc06} {S. V\~{u} Ng\d{o}c}, {\em Syst\`{e}mes int\'{e}grables semi-classiques:    du local au global, Panoramas et Synth\`{e}ses}, [Panoramas and Syntheses], Vol. 22 (Soci\'{e}t\'{e} Math\'{e}matique de France, Paris, 2006).


\bibitem {Vu-Ngoc99}{S. V\~{u} Ng\d{o}c}, {\em  Quantum monodromy in integrable systems},
    Commun.  Math. Phys. {\bf 203} (1999), no. 2, p. 465-479,
http://link.springer.com/article/10.1007.

\bibitem {Vey78} {J. Vey}, {\em Sur certains syst\`{e}mes dynamiques s\'{e}parables}, Amer. J. Math., Vol. 100, No. 3 (1978), p. 591-614, 
https://www.jstor.org/stable/2373841.


\bibitem {Eli84} {L. H. Elliasson}, {\em Hamiltonian systems with Poisson commuting integrals}, 
Doctoral thesis, University of Stockholm (1984). 



\bibitem {Eli90} {L. H. Elliasson}, {\em Normal forms for Hamiltonian systems with Poisson commuting integrals-elliptic case}, 
Comment. Math. Helv  {\bf 65}, (1990), p. 4-35, 
 https://doi.org/10.1007/BF02566590

\bibitem {Wi36} {J. Williamson}, {\em On the algebraic problem concerning the normal form of linear dynamical systems}, 
	Amer. J. Math., {\bf 58} (1), (1936), p. 141-163, 
	https://doi.org/10.2307/2371062.



\bibitem {Dimas99} {M. Dimassi and J. Sj\"{o}strand}, {\em Spectral asymptotics in the
	semi-classical limit},  London Mathematical Society Lecture Note Series, Vol. 268 (Cambridge University Press, Cambridge, 1999).

\bibitem {Shubin01} {M. A. Shubin}, {\em Pseudodifferential Operators and Spectral Theory},
2nd ed. (Springer-Verlag, Berlin, 2001), Translated from the 1978 Russian original by Stig I. Andersson.


\bibitem {Hitrik07} { M. Hitrik, J. Sj{\"o}strand, and S. V{\~u} Ng{\d{o}}c}, {\em
	Diophantine tori	and spectral  asymptotics for nonselfadjoint operators}, Am. J. Math. {\bf 129} (2007), no. 1, p. 105-182.




\bibitem {QS14} {Q. S. Phan}, {\em Spectral monodromy of non selfadjoint operators}, J. Math.
    Phys. {\bf 55}, 013504 (2014).

\bibitem {QS18} {Q. S. Phan}, {\em Spectral monodromy of small non-selfadjoint quantum perturbations of completely integrable Hamiltonians}, J. Math. Ana. App. {\bf 458}(1) (2018), p. 795-811.


 \bibitem {Z97}{N. T. Zung}, {\em A note on focus-focus singularities}, Differential Geom. Appl. {\bf 7}(2) (1997), p. 123-130.





\end{thebibliography}
\end{document}